\newcommand{\refeq}[1]{(\ref{#1})}
\newcommand{\prob}{\mathbb{P}}
\newcommand{\expec}{\mathbb{E}}
\newcommand{\var}{\mathbb{V}ar}
\newcommand{\calN}{\mathcal N}
\newcommand{\calK}{\mathcal K}
\newcommand{\calH}{\mathcal H}
\newcommand{\calZ}{\mathcal Z}
\newcommand{\mZ}{\mathbf Z}
\newcommand{\mM}{\mbox{\textit{\textbf{M}}}}
\newcommand{\mC}{\mbox{\textit{\textbf{C}}}}
\newcommand{\mS}{\mbox{\textit{\textbf{S}}}}
\newcommand{\ie}{\textit{i.e. }}
\newcommand{\R}{\mathbb{R}}
\def\bydef{\mathop{\overset{\mbox{\scriptsize def.}}{=}}\nolimits}
\newcommand{\myProject}{This work was partially supported by National Agency for Research (ANR) through ANR-CSOSG Program (Project ANR-07-SECU-004). }
\newcommand{\myProjectb}{With the financial support from the Prevention of and Fight against Crime Programme of the European Union European Commission - Directorate-General Home Affairs. (2centre.eu project).}
\newcommand{\myProjectc}{Research partially funded by Troyes University of Technology (UTT) strategic program COLUMBO. }
\newcommand{\myLab}{ICD - LM2S - Université de Technologie de Troyes - UMR STMR CNRS\\
12, rue Marie Curie - B.P. 2060 - 10010 Troyes cedex - France\\
E-mail : name.surname@utt.fr}
\title{Statistical Detection of LSB Matching\\Using Hypothesis Testing Theory}
\author{Rémi Cogranne, Cathel Zitzmann,  Florent Retraint, Igor Nikiforov, Lionel Fillatre and Philippe Cornu \thanks{\myProject \myProjectb \myProjectc}}
\institute{\myLab}
\begin{document}

\makeatother

\maketitle

\begin{abstract}
This paper investigates the detection of information hidden by the
Least Significant Bit (LSB) matching scheme. In a theoretical context
of known image media parameters, two important results are presented.
First, the use of hypothesis testing theory allows us to design the
Most Powerful (MP) test. Second, a study of the MP test gives us the
opportunity to analytically calculate its statistical performance
in order to warrant a given probability of false-alarm. In practice
when detecting LSB matching, the unknown image parameters have to
be estimated. Based on the local estimator used in the Weighted Stego-image
(WS) detector, a practical test is presented. A numerical comparison
with state-of-the-art detectors shows the good performance of the
proposed tests and highlights the relevance of the proposed methodology. 
\end{abstract}

\section{Introduction and Contributions.}\label{ssec:introduction}
Steganography and steganalysis form a cat-and-mouse game. On the one
hand, steganography aims at hiding the very presence of a secret message
by hiding it within an innocuous cover medium. On the other hand,
the goal of steganalysis (in the wide sense) is to obtain any information
about the potential steganographic system from an unknown medium.
Usually, steganalysis focuses on exposing the existence of a hidden
message in an inspected medium.\\
 Many steganographic tools are nowadays easily available on the
Internet making steganography within the reach of anyone, for legitimate
or malicious usage. It is thus crucial for security forces to be able
to reliably detect steganographic content among a (possibly very large)
set of media files. In this operational context, the detection of
a rather simple but most commonly found stegosystem seems more important
than the detection of a very complex but rarely encountered stegosystem.
The vast majority of downloadable steganographic tools insert the
secret information in the LSB plane. Consequently, substantial progress
has recently been made in the detection of such steganographic algorithms,
namely LSB replacement and LSB matching, also known as LSB $\pm1$
embedding (see~\cite{CoxFri07,Fridrich2010bk,Bohme2010} and the
references therein). However, the steganalysis of LSB matching remains
much harder than the steganalysis of LSB replacement. Indeed, if LSB
matching is used instead of LSB replacement, the detection power of
state-of-the-art detectors is significantly lower~\cite{CoxDoerr07pm,DoerrCox08}.

The recently proposed steganalyzers dedicated to LSB matching can
be roughly divided into two categories. 
On the one hand, most of the latest detectors are based on supervised
machine learning methods and use targeted~\cite{CoxDoerr08pm,pmspeffeatures10}
or universal features~\cite{Goljan06newblind,Lyu06TIFS}. As in all
applications of machine learning, the theoretical calculation of error
probabilities remains an open problem~\cite{scott07npperform}. On
the other hand, the authors of~\cite{Harmsen05pm} observed that
LSB matching acts as a low-pass filter on the image Histogram Characteristic
Function (HCF). This pioneering work lead to an entire family of histogram-based
detectors~\cite{ker05pm,CoxDoerr07pm}.

In the operational context described above, the proposed steganalyzer
must be immediately applicable without any training or tuning phase.
For this reason, the use of a machine learning based detector is hardly
possible. Moreover, the most important challenge for the steganalyst
is to provide detection algorithms with an analytical expression for
the false-alarm and missed-detection probabilities without which the
{}``uncertainty'' of the result can not be {}``measured.'' The
proposed LSB matching steganalyzers are certainly very interesting
and efficient, but these \emph{ad hoc} algorithms have been designed
with a very limited exploitation of statistical cover models and hypothesis
testing theory. Hence, a few theoretical results exist and the only solution to measure their statistical performance is the simulation on large databases.

Alternatively, the first step in the direction of hypothesis testing
has been made in~\cite{DMS2004,Cog2011SSP,Cog2011ISIT} for LSB replacement
to design a statistical test with known statistical properties. In
the present paper, this statistical approach is extended to the case
of detecting LSB matching. More precisely, the goal of this paper
is threefold: 
\begin{enumerate}
\item Define the most powerful (MP) test in the theoretical case when the
cover image parameters are known, namely the expectation and noise
variance of each pixel.
\item Analytically calculate the statistical performance of the MP test
in terms of the false-alarm and missed-detection probabilities. More
importantly, this result allows us to highlight the impact of the
noise variance and quantization on the test performance~\cite{Cog2011ISIT}. 
\item Design a practical efficient implementation of this test based on
a simple local estimation of expectation and variance of each pixel. 
\end{enumerate}

The paper is organized as follows. 
The problem of LSB matching steganalysis is casted within the framework of hypothesis testing in Section~\ref{sec:problem_stat}. Following the Neyman-Pearson approach, the MP Likelihood Ratio Test
(LRT) is presented in Section~\ref{sec:LRtest_design} and its statistical
performance is calculated in Section~\ref{sec:stat_perf_LR_conv}.
Finally, the proposed practical implementation of the Generalized
LRT (GLRT) is presented in Section~\ref{sec:GLRT}. To show the relevance
of the proposed approach, numerical results on large natural image
databases are shown in Section~\ref{sec:numerical_sim}. Section~\ref{sec:conclusion}
concludes the paper.

\section{Detection of LSB Matching Problem Statement.}\label{sec:problem_stat}

This paper mainly focuses on natural images but the extension of the
presented results to any kind of digital media is immediate. Hence,
the column vector $\mC=(c_{1},\ldots,c_{N})^{T}$ represents in this
paper a cover image of $N=N_{x}{\times}N_{y}$ grayscale pixels. The
set of grayscale levels is denoted $\calZ=\{0;\ldots;2^{B-1}\}$ as
pixels values are usually unsigned integers encoded with $B$ bits.
Each cover pixel $c_{n}$ results from the quantization: 
\begin{equation}
	c_{n}=Q(y_{n}),
\label{eq:quant_pixel_value}\end{equation}
where $y_{n}\in\R^{+}$ denotes the raw pixel intensity recorded
by the camera and $Q$ represents the uniform quantization with a
unitary step:
$$
	Q(x)=k\Leftrightarrow x\in[k-1/2\,;\, k+1/2[.
$$
Seeking simplicity, it is assumed in this paper that the saturation
effect is absent, \ie the probability of excessing the quantizer
boundaries $-1/2$ and $2^{B-1}+1/2$ is negligible. Indeed, taking
into account the under or over-exposed pixels is rather simple but
requires a much more complicated notation.\\
 The recorded pixel value can be decomposed as~\cite{Foi08,Cog2011IH}:
\begin{equation}
	y_{n}=\theta_{n}+\xi_{n},
\label{eq:cont_pixel_value}\end{equation}
 where $\theta_{n}$ is a deterministic parameter corresponding to
the mathematical expectation of $y_{n}$ and $\xi_{n}$ is a random
variable representing all the noise corrupting the cover image during
acquisition. As described in~\cite{Foi08}, $\xi_{n}$ is accurately
modeled as a realization of a zero-mean Gaussian random variable $\Xi_{n}\sim\calN(0,\sigma_{n}^{2})$
whose variance $\sigma_{n}^{2}$ varies from pixel to pixel. It thus
follows from~\refeq{eq:quant_pixel_value} and~\refeq{eq:cont_pixel_value}
that $c_{n}$ follows a distribution $P_{\theta_{n}}=P_{\theta_{n},\sigma_{n}}=(p_{\theta_{n}}[0],\ldots,p_{\theta_{n}}[2^{B-1}])$ defined by:
\begin{equation}
	\forall k\in\calZ\;,\; p_{\theta_{n}}[k]=\Phi\left(\frac{k+\nicefrac{1}{2}-\theta_{n}}{\sigma_{n}}\right)-\Phi\left(\frac{k-\nicefrac{1}{2}-\theta_{n}}{\sigma_{n}}\right),
\label{eq:distrib_H0_exact}\end{equation}
with $\Phi$ is the standard Gaussian cumulative distribution function
(cdf) defined by $\Phi(x)=\int_{-\infty}^{x}\phi(u)du$ and $\phi$
the standard Gaussian probability distribution function (pdf) $\phi(u)=\frac{1}{\sqrt{2\pi}}\exp(u^{2}/2)$.
In virtue of the mean value theorem,~\refeq{eq:distrib_H0_exact}
can be written as:
\begin{equation}
	p_{\theta_{n}}[k]=\frac{1}{\sigma_{n}}\int_{k-\frac{1}{2}}^{k+\frac{1}{2}}\phi\left(\frac{u-\theta_{n}}{\sigma_{n}}\right)du=\phi\left(\frac{k-\theta_{n}}{\sigma_{n}}+\epsilon\right),
\label{eq:distrib_H0_approx}\end{equation}
where $\epsilon$ is a (small) corrective term~\cite{Zit2011IH}.

To statistically model stego-image pixels from~\refeq{eq:distrib_H0_exact}--\refeq{eq:distrib_H0_approx},
the two following assumptions are usually adopted~\cite{DMS2004,Fridrich04c}~:
1) the probability of insertion is equal for every cover pixel (independence
between hidden bits and cover pixels) and 2) the message is assumed
compressed and/or cyphered $\mM=(m_{1},\ldots,m_{L})^{T}$ before
insertion. Hence, each hidden bit $m_{l}$ is drawn from a binomial
distribution $\mathcal{B}(1,\nicefrac{1}{2})$, \ie $m_{l}$ is either
$0$ or $1$ with the same probability. This situation is captured
by denoting
\begin{equation}
	\forall n\in\{0,\ldots,N\}\,,\,\left\{
	\begin{array}{c}
		\prob[s_{n}=c_{n}]=(1{-}R),\\
		\prob[s_{n}=c_{n}+\mathrm{ins}(m_{n},c_{n})]=R,
	\end{array}\right.
\label{eq:proba_modif_H0-1}\end{equation}
where $\mS=\{s_{1},\ldots,s_{N}\}$ are the values of stego-image
pixels, the embedding rate $R=\nicefrac{L}{N}$ corresponds to the
number of hidden bits per cover pixel and $\mathrm{ins}(m_{n},c_{n})$
represents the value added to $c_{n}$ to insert the hidden bit $m_{n}$.\\
 The particularity of LSB matching lies in its insertion function
$\mathrm{ins}:\{0;1\}\times\calZ\mapsto\{-1;0;1\}$. Whenever the
LSB of $c_{n}$ is equal to $m_{n}$, \ie when $\mathrm{lsb}(c_{n})=c_{n}\mathrm{mod}2=m_{n}$, there is no need to change $c_{n}$, hence $\mathrm{ins}(m_{n},c_{n})=0$.
On the contrary, whenever $\mathrm{lsb}(c_{n})\neq m_{n}$, the insertion
must change the LSB of $c_{n}$, which is done by adding or subtracting
$1$ with the same probabilities:
\begin{equation}
	\left\{
	\begin{array}{lcl}
		\prob[\mathrm{ins}(b_{s},c_{n})=1\,|\,\mathrm{lsb}(c_{n})\neq m_{n}] & = & \nicefrac{1}{2}\\
		\prob[\mathrm{ins}(b_{s},c_{n})=-1\,|\,\mathrm{lsb}(c_{n})\neq m_{n}] & = & \nicefrac{1}{2}.
	\end{array}
	\right.
\label{eq:proba_insert_H0-1}\end{equation}
 Since each hidden bit $m_{n}$ follows the binomial distribution
$\mathcal{B}(1,\nicefrac{1}{2})$, a straightforward calculation finally
shows that $\prob[\mathrm{lsb}(c_{n})=m_{n}]=\prob[\mathrm{lsb}(c_{n})\neq m_{n}]=\nicefrac{1}{2}$.
Hence, as described in~\cite{Harmsen05pm,CoxDoerr07pm,CoxDoerr08pm,Cog2012SSP},
it follows from~\refeq{eq:proba_modif_H0-1}--\refeq{eq:proba_insert_H0-1}
that for all $n\in\{1,\ldots,N\}$, the pmf of the stego-pixel $s_{n}$
after embedding at rate $R$ with LSB matching is given by $Q_{\theta_{n}}^{R}=\left(q_{\theta_{n}}^{R}[0],\ldots,q_{\theta_{n}}^{R}[2^{b}-1]\right)$
with $\forall k\in\calZ$:
\begin{equation}
	q_{\theta_{n}}^{R}[k]=\frac{R}{4}\left(p_{\theta_{n}}[k{-}1]+p_{\theta_{n}}[k{+}1]\right)+\left(1{-}\frac{R}{2}\right)p_{\theta_{n}}[k].
\label{eq:distrib_H0-1}\end{equation}

\section{Likelihood Ratio Test (LRT) for two simple hypotheses.}\label{sec:LRtest_design}

When analyzing an unknown medium $\mZ$ the first goal of LSB matching
steganalysis is to decide between the two following hypotheses:
\begin{equation}
	\begin{array}{lclcl}
		 &  & \calH_{0} & = & \{z_{n}\sim P_{\theta_{n}}\,,\forall n\in\{1,\ldots,N\}\}\\
		 & \mbox{ vs } & \calH_{1} & = & \{z_{n}\sim Q_{\theta_{n}}^{R}\,,\forall n\in\{1,\ldots,N\}\}.
	\end{array}
\label{eq:match_distr_H1}\end{equation}
 Let us start with the simplest case, when the embedding rate $R$
and, for all $n$, the parameters $\theta_{n}$ and $\sigma_{n}$
are known. In this case, the hypothesis testing problem~\refeq{eq:match_distr_H1}
is reduced to a test between two simple hypotheses.\\
 The goal is obviously to find a test $\delta:\calZ^{N}\mapsto\{\calH_{0},\calH_{1}\}$, such that hypothesis $\calH_{i}$ is accepted if $\delta(\mZ)=\calH_{i}$
(see~\cite{lehman05} for details about statistical hypothesis testing).
However, as explained in the introduction, in an operational forensics
context the most important challenge is first, to warrant a prescribed
(very low) false-alarm probability and second, to maximize the detection
power defined by:
$$
	\beta_{\delta}=\prob_{1}[\delta(\mZ)=\calH_{1}],
$$
where $\prob_{i}(\cdot)$ stands for the probability under hypotheses
$\calH_{i}\,,\, i=\{0;1\}$. Therefore, let $\calK_{\alpha}$ be the
class of tests with an upper-bounded false-alarm probability $\alpha_{0}$
defined by
\begin{equation}
	\calK_{\alpha}=\left\{ \delta:\prob_{0}[\delta(\mZ)=\calH_{1}]\leq\alpha_{0}\right\} .
\label{eq:class_testalpha_def}\end{equation}
In virtue of the Neyman-Pearson lemma, see~\cite[Theorem 3.2.1]{lehman05},
the most powerful (MP) test over the class $\calK_{\alpha_{0}}$~\refeq{eq:class_testalpha_def}
is the LRT given by the following decision rule:
\begin{equation}
	\delta_{R}(\mZ)=
	\left\{
	\begin{array}{lclcl}
		\calH_{0} & \mbox{ if } & \Lambda_{R}(\mZ) & \leq & \tau_{\alpha_{0}}\\
		\calH_{1} & \mbox{ if } & \Lambda_{R}(\mZ) & > & \tau_{\alpha_{0}},
	\end{array}
	\right.
\label{eq:NP_test_simple}\end{equation}
where $\tau_{\alpha_{0}}$ is the solution of $\prob_{0}[\delta(\mZ)>\tau_{\alpha_{0}}]=\alpha_{0}$,
to insure that $\delta_{R}\in\calK_{\alpha_{0}}$, and the likelihood
ratio (LR) $\Lambda_{R}(\mZ)$ is given, from the statistical independence
between pixels, by:
\begin{equation}
	\Lambda_{R}(\mZ)=\prod_{n=1}^{N}\Lambda_{R}(z_{n})=\prod_{n=1}^{N}\frac{R}{4}\frac{p_{\theta_{n}}[z_{n}-1]+p_{\theta_{n}}[z_{n}+1]}{p_{\theta_{n}}[z_{n}]}+\left(1-\frac{R}{2}\right).
\label{eq:match_LR_r}\end{equation}
It can be noted that $\Lambda_{R}(z_{n})$ depends on pixel values
$z_{n}$ through the quantity: 
\begin{equation}
	\Lambda_{2}(z_{n})=\frac{1}{2}\frac{p_{\theta_{n}}[z_{n}-1]+p_{\theta_{n}}[z_{n}+1]}{p_{\theta_{n}}[z_{n}]},
\label{eq:match_LR_1}\end{equation}
 which corresponds to the the likelihood ratio for the conceptual
case of $R=2$. In other words, Equation~\refeq{eq:match_LR_1} corresponds
to this test: $\calH_{0}:\{\, \mZ \mbox{ is a cover medium}\, \}$ vs $\calH_{1}:\{\, \mbox{each pixel of } \mZ \mbox{\, is modified by} \ensuremath{\pm1} \,\}$.
Indeed, considering the case $R\!=\!2$ permits us to clarify the
present methodology, which is then extended to the more general case
of $R\in]0;1[$ in Section~\ref{ssec:stat_perf_LR_anyR}. 

The exact expression for the LR $\Lambda_{2}(z_{n})$ is complicated
due to the corrective terms $\epsilon$ defined in~\refeq{eq:distrib_H0_approx}.
However, the calculation shows that these corrective terms are usually
negligible, particularly when $\sigma_{n}>1$. Therefore, it is proposed
to neglect $\epsilon$ in order to obtain a simplified expression
for the LR $\Lambda_{2}(z_{n})$. From~\refeq{eq:distrib_H0_approx},
this approximation permits us to write:
\begin{eqnarray}
	\frac{p_{\theta_{n}}[z_{n}-1]}{p_{\theta_{n}}[z_{n}]} & = & \exp\left(-\frac{1}{2\sigma_{n}^{2}}\right)\exp\left(\frac{\theta_{n}-z_{n}}{\sigma_{n}^{2}}\right),\nonumber \\
	\frac{p_{\theta_{n}}[z_{n}+1]}{p_{\theta_{n}}[z_{n}]} & = & \exp\left(-\frac{1}{2\sigma_{n}^{2}}\right)\exp\left(\frac{z_{n}-\theta_{n}}{\sigma_{n}^{2}}\right).
\label{eq:match_LR_dvlp}
\end{eqnarray}
Finally, using~\refeq{eq:match_LR_dvlp}, the LR $\Lambda_{2}(z_{n})$
can be written as:
\begin{equation}
	\Lambda_{2}(z_{n})=\frac{1}{4}\exp\left(\frac{-1}{2\sigma_{n}^{2}}\right)\left[\exp\left(\frac{z_{n}-\theta_{n}}{\sigma_{n}^{2}}\right)+\exp\left(\frac{\theta_{n}-z_{n}}{\sigma_{n}^{2}}\right)\right].
\label{eq:match_LR_r_simple}\end{equation}
The logarithm of the likelihood ratio~\refeq{eq:match_log_LR_r_simple}
is usually preferred in order to replace the product in~\refeq{eq:match_LR_r}
with a sum. From~\refeq{eq:match_LR_r_simple}, it immediately follows
that:
\begin{eqnarray}
	\widetilde{\Lambda}_{2}(z_{n}) & \bydef & \log\left[\exp\left(\frac{z_{n}-\theta_{n}}{\sigma_{n}^{2}}\right)+\exp\left(\frac{\theta_{n}-z_{n}}{\sigma_{n}^{2}}\right)\right]
\label{eq:match_log_LR_r_simple}\\
	 & = & \log\big(\Lambda_{2}(z_{n})\big)+\log(2)+\frac{1}{2\sigma_{n}^{2}}.\nonumber
\end{eqnarray}
Again, one can note that the terms $\log(4)$ and $\frac{1}{2\sigma_{n}^{2}}$
do not depend on the true hypothesis. That is why, for the same reasons
as those discussed in connection with Equation~\refeq{eq:match_LR_1},
these terms do not play any role in solving the detection problem~\refeq{eq:match_distr_H1}.
For the sake of clarity, these terms are thus omitted from expression~\refeq{eq:match_log_LR_r_simple}
of the log-LR $\widetilde{\Lambda}_{2}(z_{n})$.

\section{Statistical Performance of the LR test.}\label{sec:stat_perf_LR_conv}
\subsection{Case of simple hypotheses, when $R=2$.}\label{ssec:stat_perf_LR_R1}
In this section it is first proposed to study the statistical performance
for the case of simple hypotheses, when $R=2$. The results are then
extended to the general case of $R\in]0;1[$ in Section~\ref{ssec:stat_perf_LR_anyR}.
To easily calculate the statistical performance of the LR test $\delta_{R}$~\refeq{eq:NP_test_simple},
the asymptotic approach is of crucial interest. Moreover, the assumption
that $N$ grows to infinity is relevant in practice due to the very
large number of pixels in typical images.\\
 For the sake of clarity, let the mean expectation and the mean
variance of $\widetilde{\Lambda}_{2}(z_{n})$ under hypotheses $\calH_{i}$
be defined as follows:
\begin{equation}
	\mu_{i}=\frac{1}{N}\sum_{n=1}^{N}\expec_{i}\Big[\widetilde{\Lambda}_{2}(z_{n})\Big]\;\;\mbox{ and }\;\;\sigma_{i}^{2}=\frac{1}{N}\sum_{n=1}^{N}\var_{i}\Big[\widetilde{\Lambda}_{2}(z_{n})\Big],
\label{eq:mean_var_expect_lambda1}
\end{equation}
where $\expec_{i}\big[\widetilde{\Lambda}_{2}(\mZ)\big]$ and $\var_{i}\big[\widetilde{\Lambda}_{2}(\mZ)\big]$
are respectively the expectation and the variance of $\widetilde{\Lambda}_{2}(z_{n})$
under hypotheses $\calH_{i}\,,\, i=\{0,1\}$. \\
 The test $\widetilde{\delta}_{2}$ associated with the {}``normalized''
log-LR $\widetilde{\Lambda}_{2}(\mZ)$ is defined as:
\begin{equation}
	\widetilde{\delta_{2}}=
	\begin{cases}
		\calH_{0} & \mbox{ if }\;\;\;\widetilde{\Lambda}_{2}(\mZ)\leq\widetilde{\tau}_{\alpha_{0}},\\
		\calH_{1} & \mbox{ if }\;\;\;\widetilde{\Lambda}_{2}(\mZ)>\widetilde{\tau}_{\alpha_{0}}.\end{cases}\;\;\;\mbox{ where }\;\;\;{\displaystyle \widetilde{\Lambda}_{2}(\mZ)\bydef\frac{\sum\limits _{n=1}^{N}\widetilde{\Lambda}_{2}(z_{n})-N\mu_{0}}{\sqrt{N\sigma_{0}^{2}}},}
\label{eq:simplifiedLR_test}\end{equation}
It can noted that the random variables $\widetilde{\Lambda}_{2}(z_{n})$ are assumed statistically independent and, for any $\sigma_{n}>0$, have finite expectation and variance, which implies that the conditions necessary for application of the Lindeberg's central limit theorem~\cite[Theorem 11.2.5]{lehman05} are satisfied.
These conditions can also be shown by using the fact that $z_{n}$
are bounded because they can only take values in the set $\mathcal{Z}$.
Therefore,
\begin{equation}
	\widetilde{\Lambda}_{2}(\mZ)\rightsquigarrow
	\left\{
	\begin{array}{lcl}
		{\displaystyle \calN(0\,,\,1)\;} & \;\mbox{under }\; & \;\calH_{0}\\
		{\displaystyle \calN\left(\frac{\sqrt{N}(\mu_{2}-\mu_{0})}{\sigma_{0}}\,,\,\frac{\sigma_{2}^{2}}{\sigma_{0}^{2}}\right)\;} & \;\mbox{under }\; & \;\calH_{1}.
	\end{array}
	\right.
\label{eq:CLT_logLR}\end{equation}
where $\rightsquigarrow$ represents the convergence in distribution
as $N\to\infty$. From Equation~\refeq{eq:CLT_logLR}, a short algebra
establishes the following theorem. 
\begin{theorem}
\label{thm:thres_simple_test1} For any given probability of false
alarm $\alpha_{0}\in]0;1[$, the decision threshold $\widetilde{\tau}_{\alpha_{0}}$ given by:
\begin{eqnarray}
	\widetilde{\tau}_{\alpha_{0}}=\Phi^{-1}(1-\alpha_{0})
\label{eq:thres_simple_test1}\end{eqnarray}
where $\Phi^{-1}(\cdot)$ is the Gaussian inverse cumulative distribution,
asymptotically warrants that the test $\widetilde{\delta_{2}}$~\refeq{eq:simplifiedLR_test}
is in $\calK_{\alpha_{0}}$. 
\end{theorem}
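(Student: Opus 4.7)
The theorem essentially reads off the false-alarm probability from the CLT statement in Equation~\refeq{eq:CLT_logLR}, so the plan is to make that inference explicit and then invert the standard normal cdf at the desired level.

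First I would recall the setting: by construction, the normalized statistic
\[
	\widetilde{\Lambda}_{2}(\mZ)=\frac{\sum_{n=1}^{N}\widetilde{\Lambda}_{2}(z_{n})-N\mu_{0}}{\sqrt{N\sigma_{0}^{2}}}
\]
has zero mean and unit variance under $\calH_{0}$, since $\mu_{0}$ and $\sigma_{0}^{2}$ were defined in~\refeq{eq:mean_var_expect_lambda1} precisely as the pixel-averaged expectation and variance of $\widetilde{\Lambda}_{2}(z_{n})$ under $\calH_{0}$. The summands $\widetilde{\Lambda}_{2}(z_{n})$ are independent (pixels are assumed independent), and because $z_{n}\in\calZ$ is bounded while $\sigma_{n}>0$, the values of $\widetilde{\Lambda}_{2}(z_{n})$ given in~\refeq{eq:match_log_LR_r_simple} are uniformly bounded. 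This makes Lindeberg's condition trivial, so the Lindeberg CLT~\cite[Theorem 11.2.5]{lehman05} yields $\widetilde{\Lambda}_{2}(\mZ)\rightsquigarrow\calN(0,1)$ under $\calH_{0}$, which is the first line of~\refeq{eq:CLT_logLR}.

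Next, applying the definition of the test $\widetilde{\delta_{2}}$ in~\refeq{eq:simplifiedLR_test}, the false-alarm probability is
\[
	\prob_{0}\bigl[\widetilde{\delta_{2}}(\mZ)=\calH_{1}\bigr]=\prob_{0}\bigl[\widetilde{\Lambda}_{2}(\mZ)>\widetilde{\tau}_{\alpha_{0}}\bigr]\xrightarrow[N\to\infty]{}1-\Phi\bigl(\widetilde{\tau}_{\alpha_{0}}\bigr),
\]
where the limit uses the weak convergence just established together with continuity of $\Phi$ at every point. Setting $\widetilde{\tau}_{\alpha_{0}}=\Phi^{-1}(1-\alpha_{0})$ gives $1-\Phi\bigl(\Phi^{-1}(1-\alpha_{0})\bigr)=\alpha_{0}$, so asymptotically the false-alarm probability equals $\alpha_{0}$ and $\widetilde{\delta_{2}}\in\calK_{\alpha_{0}}$.

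The only subtle point is checking that the Lindeberg CLT is really applicable, but this reduces to the uniform boundedness argument above (the alphabet $\calZ$ is finite and $\sigma_{n}>0$), a fact already flagged in the text preceding~\refeq{eq:CLT_logLR}. Everything else is a one-line algebraic inversion of $\Phi$, so no real obstacle is anticipated.
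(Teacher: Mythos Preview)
Your proposal is correct and follows essentially the same route as the paper: the paper also reads off $\prob_{0}[\widetilde{\Lambda}_{2}(\mZ)>\widetilde{\tau}_{\alpha_{0}}]=1-\Phi(\widetilde{\tau}_{\alpha_{0}})$ directly from the asymptotic normality in~\refeq{eq:CLT_logLR} and then inverts $\Phi$ (this appears as the first half of the combined proof of Theorems~\ref{thm:thres_simple_test1} and~\ref{thm:LR_simple_power1}). Your extra justification of the Lindeberg condition via boundedness is already sketched in the text preceding~\refeq{eq:CLT_logLR}, so nothing new is needed.
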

The main conclusion of Theorem~\ref{thm:thres_simple_test1} is that
the decision threshold $\widetilde{\tau}_{\alpha_{0}}$ depends neither
on the embedding rate $R$ nor the image parameters $\theta_{n}$
and $\sigma_{n}$. Hence, by using the {}``normalized'' log-LR $\widetilde{\Lambda}_{2}(\mZ)$,
the same threshold permits us to respect a prescribed false-alarm
probability $\alpha_{0}$ whatever the analyzed image and the embedding
rate are.\\
\begin{figure}[!t]
\vspace*{-0.3cm}
\centerline{
\hspace*{-0.75cm}
  \subfloat[Expectations of the log-LR $\log\big( \Lambda_2(z_n) \big) $ as a function of expectation $\theta$.]{
  \hspace*{-0.15cm}
  \def\svgwidth{0.45\textwidth}
    \ifpdf
      \input{./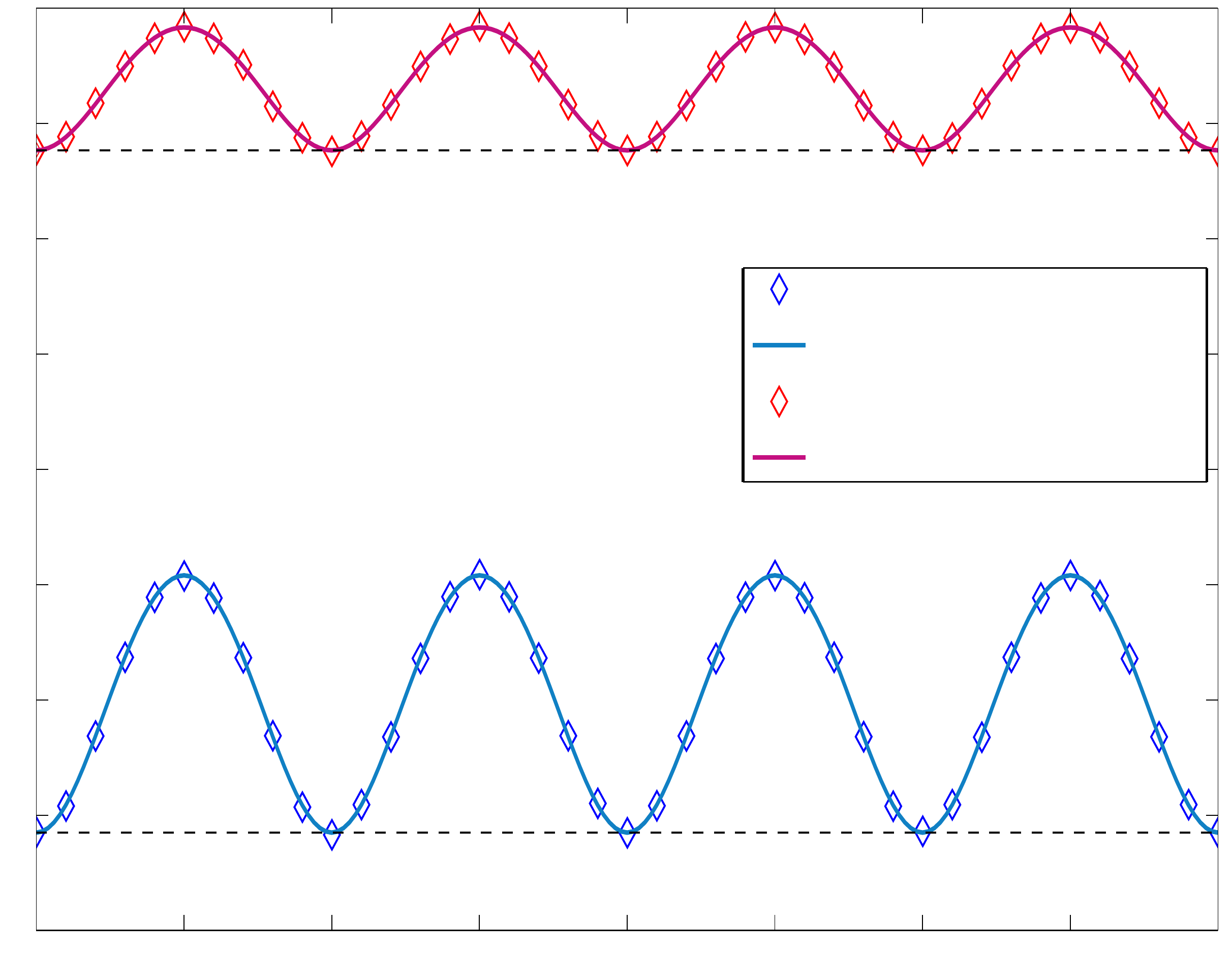_tex}
    \else
    \fi
  \label{fig:detect_theo_match_mean_quantifs}  }
\hspace*{0.35cm}
  \subfloat[Variances of the log-LR $\log\big( \Lambda_2(z_n) \big) $ as a function of expectation $\theta$.]{
  \def\svgwidth{0.45\textwidth}
  \hspace*{-0.15cm}
    \ifpdf
      \input{./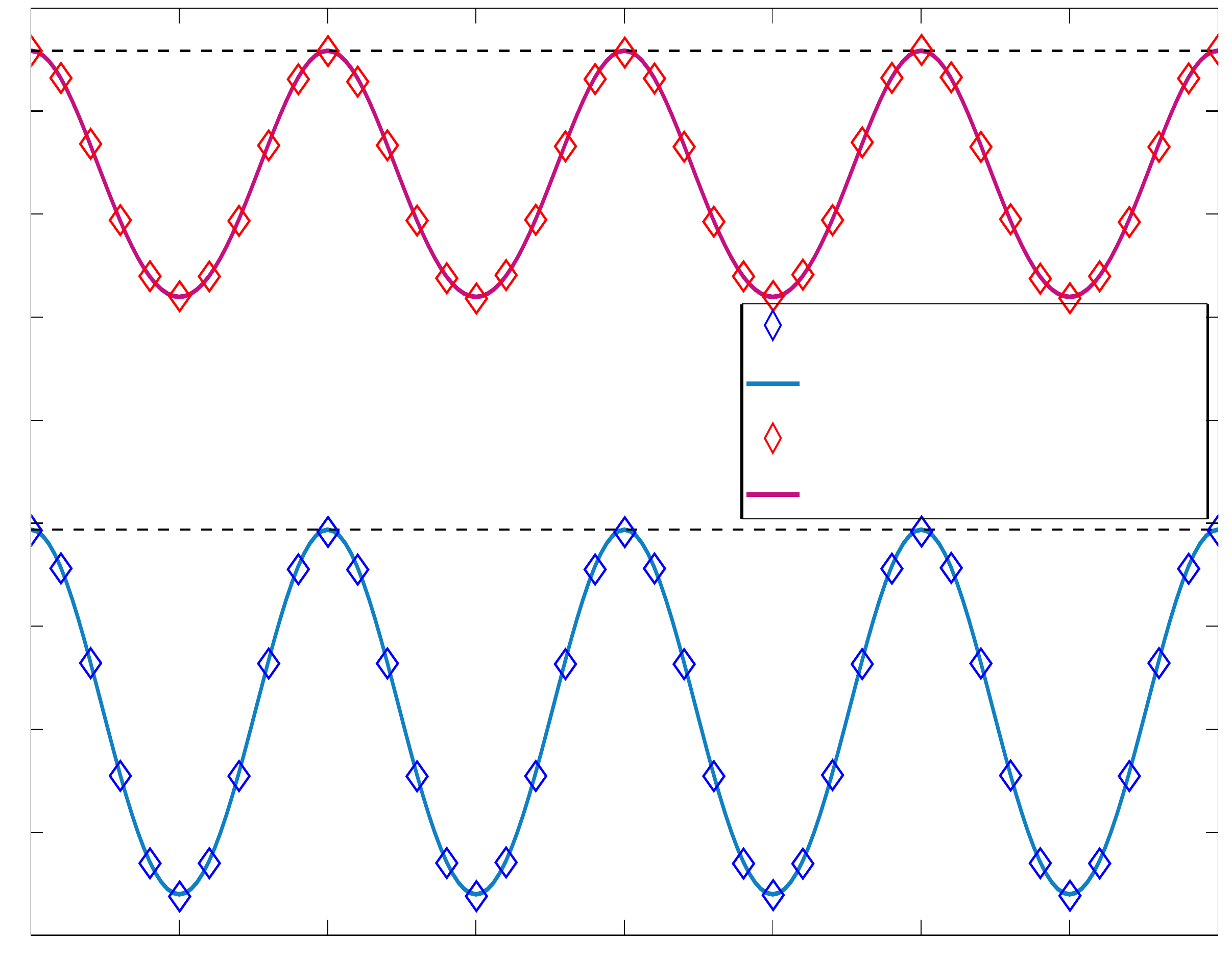_tex}
    \else
    \fi
  \label{fig:detect_theo_match_var_quantifs}  }
  }
\caption{Graphical representation of the two first moments of log-LR $\log\big( \Lambda_2(z_n) \big)$~\refeq{eq:match_H0_mean0} - \refeq{eq:match_H1_var1}. Presented results  correspond to the case of i.i.d pixels with expectation $\theta_n\in[126; 130]$ and standard deviation $\sigma_n=0.75$. }
\label{fig:detect_theo_match_moments}
\end{figure}
Equation~\refeq{eq:CLT_logLR} also implies that to asymptotically
calculate the detection power of LR test $\widetilde{\delta_{2}}$~\refeq{eq:simplifiedLR_test}, one only needs to calculate the first moments of $\widetilde{\Lambda}_{2}(\mZ)$.
The mean expectations used in the log-LR $\widetilde{\Lambda}_{2}(z_{n})$
are given under hypotheses $\calH_{0}$ and $\calH_{1}$ by
\begin{eqnarray}
	\mu_{0} & = & \frac{1}{N}\sum_{n=1}^{N}\sum_{k\in\calZ}p_{\theta_{n}}[k]\log\left(\exp\left(\frac{k-\theta_{n}}{\sigma_{n}^{2}}\right)+\exp\left(\frac{\theta_{n}-k}{\sigma_{n}^{2}}\right)\right),\;\;\;\;\;\;\;
\label{eq:match_H0_mean0}\\
	\mu_{2} & = & \frac{1}{N}\sum_{n=1}^{N}\sum_{k\in\calZ}q_{\theta_{n}}^{R}[k]\log\left(\exp\left(\frac{k-\theta_{n}}{\sigma_{n}^{2}}\right)+\exp\left(\frac{\theta_{n}-k}{\sigma_{n}^{2}}\right)\right),\;\;\;\;\;\;\;
\label{eq:match_H1_mean1}\end{eqnarray}
 where the probabilities $p_{\theta_{n}}[k]$ and $q_{\theta_{n}}^{R}[k]$
are respectively defined in~\refeq{eq:distrib_H0_exact} and~\refeq{eq:distrib_H0-1}.\\
Similarly, the mean variances are by definition given under both
hypotheses $\calH_{0}$ and $\calH_{1}$ by:
\begin{eqnarray}
	\sigma_{0}^{2} & \,=\, & \frac{1}{N}\sum_{n=1}^{N}\sum_{k\in\calZ}p_{\theta_{n}}[k]\log\left(\exp\left(\frac{k{-}\theta_{n}}{\sigma_{n}^{2}}\right)+\exp\left(\frac{\theta_{n}{-}k}{\sigma_{n}^{2}}\right)\right)^{2}-\mu_{0}^{2},\;\;\;\;\;\;\;\;
\label{eq:match_H0_var0}\\
	\sigma_{2}^{2} & \,=\, & \frac{1}{N}\sum_{n=1}^{N}\sum_{k\in\calZ}q_{\theta_{n}}^{R}[k]\log\left(\exp\left(\frac{k{-}\theta_{n}}{\sigma_{n}^{2}}\right)+\exp\left(\frac{\theta_{n}{-}k}{\sigma_{n}^{2}}\right)\right)^{2}-\mu_{2}^{2}.\;\;\;\;\;\;\;\;
\label{eq:match_H1_var1}\end{eqnarray}
The expectations $\mu_{0}$ and $\mu_{2}$ and the variances $\sigma_{0}^{2}$
and $\sigma_{2}^{2}$ as functions of $\theta_{n}$ are respectively
drawn in Figures~\ref{fig:detect_theo_match_mean_quantifs} and~\ref{fig:detect_theo_match_var_quantifs}.
These figures highlight the fact that the pixel expectation $\theta_{n}$
can have a significant impact on the LR moments, and later on the
detection power, particularly when $\sigma_{n}<1$. However, a thorough
study of equations~\refeq{eq:match_H0_mean0}--\refeq{eq:match_H1_var1}
shows that this phenomenon rapidly tends to be negligible when $\sigma_{n}\gtrsim1$.\\
Even thoug, the moments given in~\refeq{eq:match_H0_mean0}--\refeq{eq:match_H1_var1}
have a rather complicated expression, their numerical calculation
is straightforward as long as the parameters $\theta_{n}$ and $\sigma_{n}$
are known.

From the asymptotic distribution~\refeq{eq:CLT_logLR} of the log-LR
$\widetilde{\Lambda}_{2}(\mZ)$ and the expressions~\refeq{eq:match_H0_mean0}--\refeq{eq:match_H1_var1}
of its two first moments, the detection power of the LR test $\widetilde{\delta_{2}}$~\refeq{eq:simplifiedLR_test}
is given by the following theorem.
\begin{theorem}\label{thm:LR_simple_power1} 
For any $\alpha_{0}\in]0;1[$, assuming that the parameters $\{\theta_{n}\}_{n=1}^{N}$ and $\{\sigma_{n}\}_{n=1}^{N}$ are known, the power function $\widetilde{\beta}_{\delta_{2}}$ associated with the test $\widetilde{\delta}_{2}$~\refeq{eq:simplifiedLR_test} is asymptotically given, as $N\to\infty$, by:
\begin{equation}
	\widetilde{\beta}_{\delta_{2}}=1-\Phi\left(\frac{\sigma_{0}}{\sigma_{2}}\Phi^{-1}(1-\alpha_{0})+\frac{\sqrt{N}(\mu_{0}-\mu_{2})}{\sigma_{2}}\right).
\label{eq:LR_simple_power1}\end{equation}
\end{theorem}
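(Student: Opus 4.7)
The plan is to derive the power by a direct substitution into the asymptotic distribution of $\widetilde{\Lambda}_{2}(\mZ)$ under $\calH_{1}$, already established in Equation~\refeq{eq:CLT_logLR}. By definition the power of the test is $\widetilde{\beta}_{\delta_{2}}=\prob_{1}[\widetilde{\Lambda}_{2}(\mZ)>\widetilde{\tau}_{\alpha_{0}}]$, and Theorem~\ref{thm:thres_simple_test1} supplies the threshold $\widetilde{\tau}_{\alpha_{0}}=\Phi^{-1}(1-\alpha_{0})$. The whole argument is therefore a standardization of a Gaussian random variable.

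First, I would check that the Lindeberg conditions invoked in deriving~\refeq{eq:CLT_logLR} apply equally well under $\calH_{1}$: the summands $\widetilde{\Lambda}_{2}(z_{n})$ are independent, bounded (since $z_{n}\in\calZ$ is finite), and have finite variance under $Q_{\theta_{n}}^{R}$, so the central limit theorem gives, as $N\to\infty$, the convergence
\begin{equation*}
\frac{1}{\sqrt{N\sigma_{2}^{2}}}\sum_{n=1}^{N}\Bigl(\widetilde{\Lambda}_{2}(z_{n})-\expec_{1}[\widetilde{\Lambda}_{2}(z_{n})]\Bigr)\,\rightsquigarrow\,\calN(0,1).
\end{equation*}
Combining with the definition of $\widetilde{\Lambda}_{2}(\mZ)$ in~\refeq{eq:simplifiedLR_test} and recalling the mean expectations $\mu_{0}$, $\mu_{2}$ and mean variances $\sigma_{0}^{2}$, $\sigma_{2}^{2}$ given in~\refeq{eq:mean_var_expect_lambda1}, I recover exactly the asymptotic distribution $\calN\bigl(\sqrt{N}(\mu_{2}-\mu_{0})/\sigma_{0},\sigma_{2}^{2}/\sigma_{0}^{2}\bigr)$ announced in~\refeq{eq:CLT_logLR}.

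Second, I would compute the power by centering and scaling with the asymptotic mean and standard deviation under $\calH_{1}$:
\begin{equation*}
\widetilde{\beta}_{\delta_{2}}=\prob_{1}\!\left[\frac{\widetilde{\Lambda}_{2}(\mZ)-\sqrt{N}(\mu_{2}-\mu_{0})/\sigma_{0}}{\sigma_{2}/\sigma_{0}}>\frac{\Phi^{-1}(1-\alpha_{0})-\sqrt{N}(\mu_{2}-\mu_{0})/\sigma_{0}}{\sigma_{2}/\sigma_{0}}\right].
\end{equation*}
Since the left-hand side converges to a standard Gaussian, the asymptotic expression follows by taking $1-\Phi(\cdot)$ of the right-hand quantity, which after simplification gives exactly~\refeq{eq:LR_simple_power1}.

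No step is genuinely difficult: the only technical point requiring care is ensuring that the Lindeberg condition of the CLT holds under $\calH_{1}$ as well as under $\calH_{0}$. This is what I expect to be the main obstacle to write cleanly; however it follows immediately from the boundedness of the log-LR summands on the finite set $\calZ$ and the positivity of $\sigma_{n}$, as already argued in the discussion preceding Theorem~\ref{thm:thres_simple_test1}.
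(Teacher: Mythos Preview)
Your proposal is correct and follows essentially the same route as the paper: both use the asymptotic Gaussian distribution~\refeq{eq:CLT_logLR} under $\calH_{1}$, standardize, and read off the tail probability after substituting the threshold $\widetilde{\tau}_{\alpha_{0}}=\Phi^{-1}(1-\alpha_{0})$. The only cosmetic difference is that the paper's proof also re-derives this threshold (thereby proving Theorem~\ref{thm:thres_simple_test1} along the way), whereas you invoke it directly.
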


\begin{proof}
Using the result~\refeq{eq:CLT_logLR}, it asymptotically holds that
for any $\widetilde{\tau}_{\alpha_{0}}\in\R$: 
$$
	\alpha_{0}(\widetilde{\delta}_{2})=\prob_{0}\left[\widetilde{\Lambda}_{2}(\mZ)>\widetilde{\tau}_{\alpha_{0}}\right]=1-\Phi\left(\widetilde{\tau}_{\alpha_{0}}\right).
$$
Hence, because $\Phi$ is strictly increasing, one has:
\begin{equation}
	(1-\alpha_{0}(\widetilde{\delta}_{2}))=\Phi(\widetilde{\tau}_{\alpha_{0}})\Leftrightarrow\widetilde{\tau}_{\alpha_{0}}=\Phi^{-1}\left(1-\alpha_{0}(\delta_{2})\right),
\end{equation}
which proves Theorem~\ref{thm:thres_simple_test1}.\\
It also follows from~\refeq{eq:CLT_logLR} that for any decision
threshold $\widetilde{\tau}_{\alpha_{0}}\in\R$ the power of the test
$\widetilde{\delta_{2}}$~\refeq{eq:simplifiedLR_test} is given by:
$$
	\widetilde{\beta}_{\delta_{2}}=\prob_{1}\left[\widetilde{\Lambda}_{2}(\mZ)>\widetilde{\tau}_{\alpha_{0}}\right]=1-\Phi\left(\frac{\sigma_{0}}{\sigma_{2}}\bigg(\widetilde{\tau}_{\alpha_{0}}-\frac{\sqrt{N}(\mu_{2}-\mu_{0})}{\sigma_{0}}\bigg)\right).
$$
By substituting $\widetilde{\tau}_{\alpha_{0}}$ by the value given
in Theorem~\ref{thm:thres_simple_test1}, a short algebra leads to
the relation~\refeq{eq:LR_simple_power1}. This proves Theorem~\ref{thm:LR_simple_power1}
and concludes the proof.
\end{proof}

\subsection{General case of $R\in]0;1[$.}\label{ssec:stat_perf_LR_anyR}
The case for which the embedding rate $R$ can take any value in $]0;1]$
is treated in a similar manner as the case $R=2$. The problem of
designing an optimal test has been shown to be particularly difficult
in~\cite{Zit2011IH}. A thorough design a MP test uniformly with
respect to the embedding rate lies outside of the scope of this paper
which mainly studies the MP test for $R=2$ and its practical implementation.
Hence, it is proposed to use the test $\widetilde{\delta_{2}}$~\refeq{eq:simplifiedLR_test}
whatever the embedding rate $R$ might be. Once again, the asymptotic
distribution~\refeq{eq:CLT_logLR} is used to solve the decision
problem~\refeq{eq:match_distr_H1}.

The alternative hypothesis $\calH_{R}$, that $\mZ$ contains a stego-medium
with embedding rate $R\in]0;1]$, can be considered as a combination
of stego and cover pixels. Hence, the use of the law of total expectation
and the law of total variance is relevant to calculate the two first
moments of the log-LR $\widetilde{\Lambda}_{2}(\mZ)$. Using the moments
given in~\refeq{eq:match_H0_mean0}--\refeq{eq:match_H1_var1}, for
the case $R=2$, a short calculation gives:
\begin{eqnarray}
	\mu_{R} & = & \frac{R}{2}\mu_{2}+\left(1-\frac{R}{2}\right)\mu_{0},\label{eq:match_HR_moments}\\
	\sigma_{R}^{2} & = & \frac{R}{2}(\sigma_{2}^{2}+\mu_{2}^{2})+\left(1-\frac{R}{2}\right)(\sigma_{0}^{2}+\mu_{0}^{2})-\left(\frac{R}{2}\mu_{2}+\Big(1-\frac{R}{2}\Big)\mu_{0}\right)^{2}.
\end{eqnarray}

\begin{figure}[!t]
\vspace*{-0.3cm}
\centerline{
\hspace*{-0.75cm}
  \subfloat[Detection power as a function of false alarm probability $\alpha_0$ (ROC curves).]{
  \hspace*{-0.15cm}
  \def\svgwidth{0.45\textwidth}
    \ifpdf
      \input{./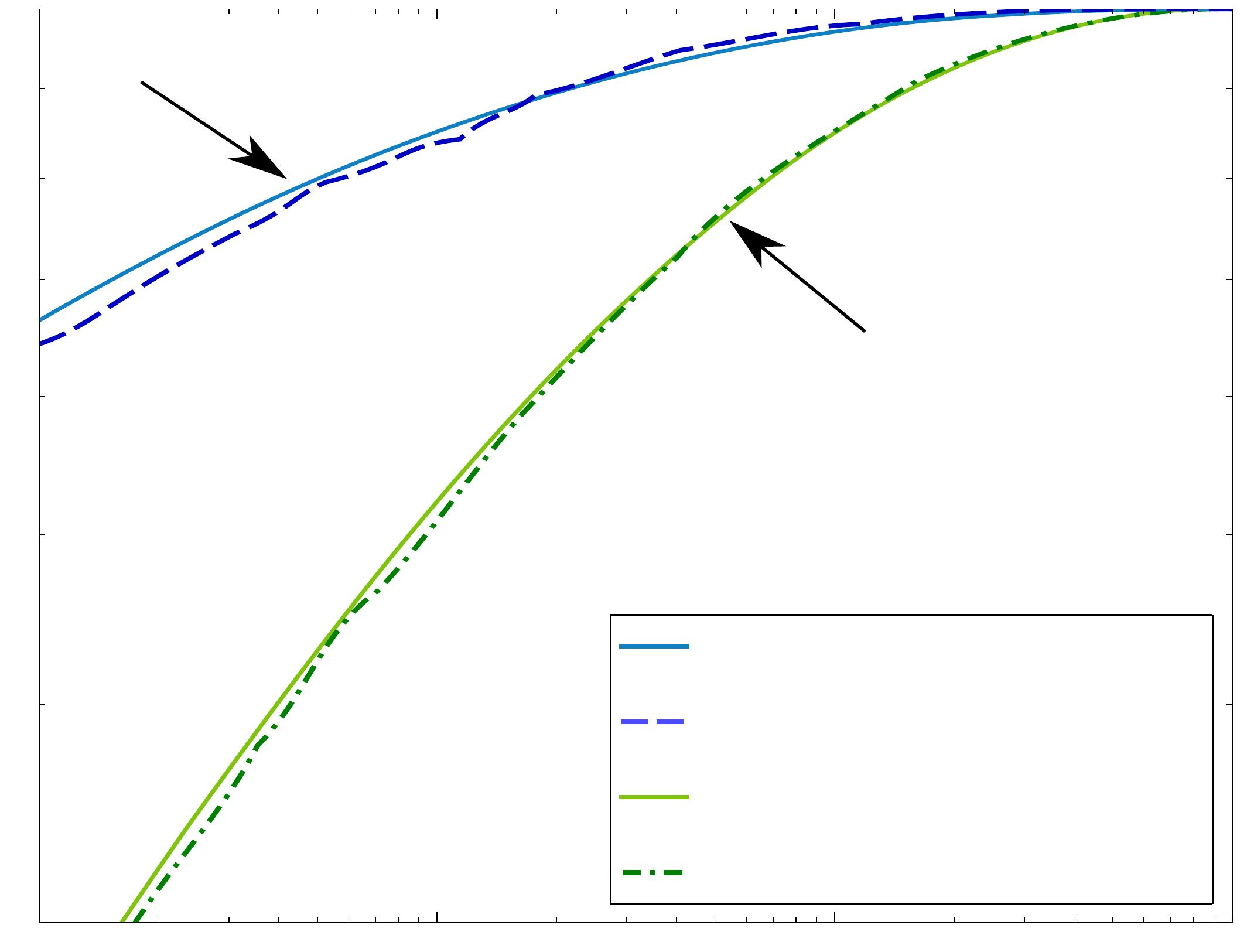_tex}
    \else
    \fi
  \label{fig:detect_theo_match_simul_beta_vs_alpha}  }
\hspace*{0.35cm}
  \subfloat[False alarm probability $\alpha_0$ as a function of threshold value $\tau_{\alpha_0}$.]{
  \def\svgwidth{0.45\textwidth}
  \hspace*{-0.15cm}
    \ifpdf
      \input{./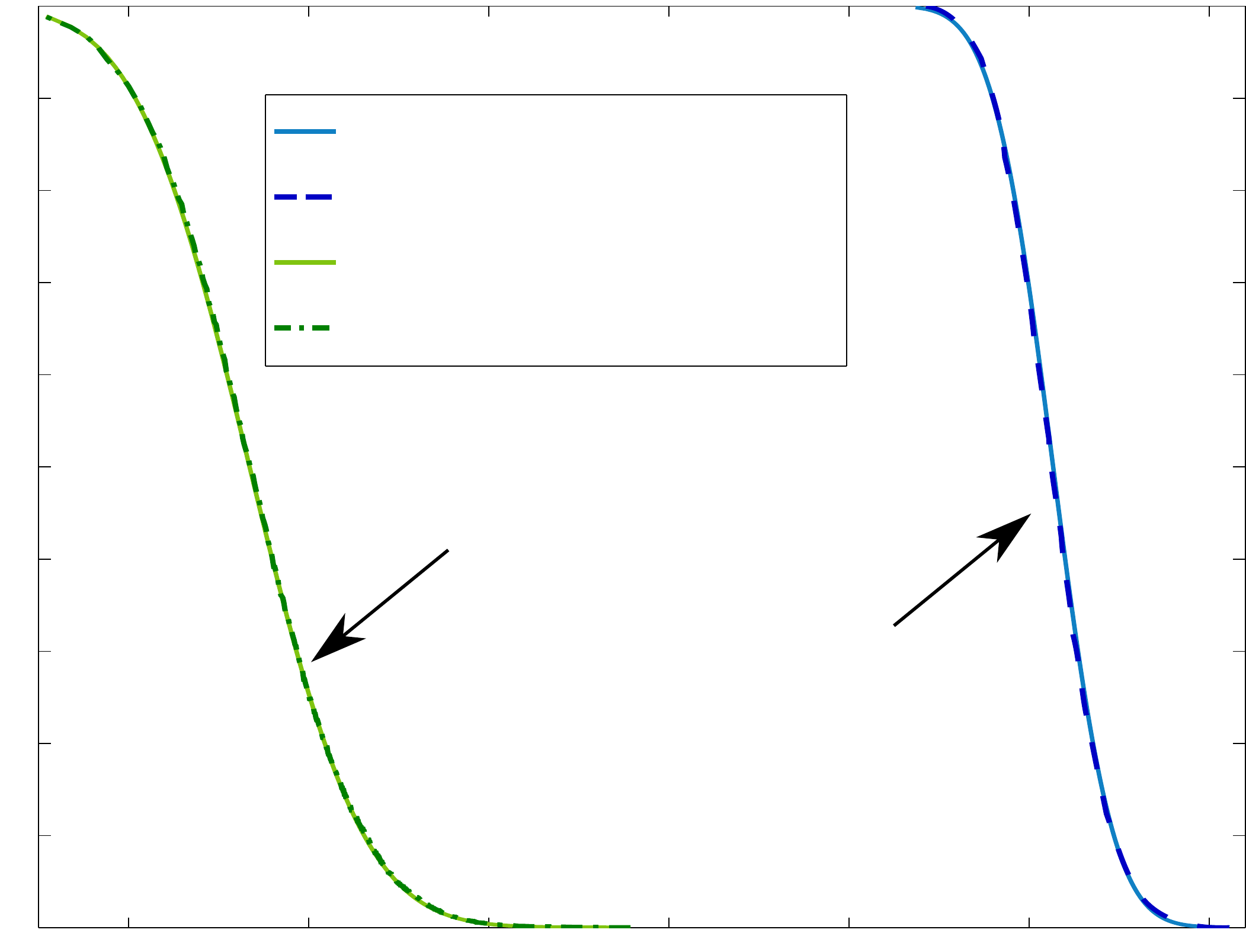_tex}
    \else
    \fi
  \label{fig:detect_theo_match_simul_alpha_vs_seuil}  }
  }
\caption{Illustration of LRT statistical performance, false-alarm probabilities and detection power, for $N=1000$ pixels, $R=0.1$, $\sigma_n=0.5$ and $\theta=\{127.5 ; 128\}$. The empirical results were obtained with $5.10^4$ realizations.}
\label{fig:detect_theo_match_performances}
\end{figure}

In other words, by using the test $\widetilde{\delta_{2}}$~\refeq{eq:simplifiedLR_test}
for any $R\in]0;1]$ only the detection power is impacted. Indeed,
the null hypothesis does not change, hence, the asymptotic distribution~\refeq{eq:CLT_logLR} of the LR $\widetilde{\Lambda}_{2}(\mZ)$ under $\calH_{0}$ as well
as the decision threshold $\widehat{\tau}_{\alpha_{0}}$~\refeq{eq:thres_simple_test1} remain the same. This point is highlighted in the following theorem. 
\begin{theorem}
\label{thm:LR_simple_powerR} For any $\alpha_{0}\in]0;1[$, assuming
that the parameters $\{\theta_{n}\}_{n=1}^{N}$ and $\{\sigma_{n}\}_{n=1}^{N}$
are known, the power function $\widetilde{\beta}_{\delta_{R}}$ associated
with the test $\widetilde{\delta}_{2}$~\refeq{eq:simplifiedLR_test}
is asymptotically given for any $R\in]0;1]$ by:
\begin{equation}
	\widetilde{\beta}_{\delta_{R}}=1-\Phi\left(\frac{\sigma_{0}}{\sigma_{R}}\Phi^{-1}(1-\alpha_{0})+\frac{R\sqrt{N}(\mu_{0}-\mu_{2})}{\sigma_{R}}\right).
\label{eq:power_simple_testR}\end{equation}
\end{theorem}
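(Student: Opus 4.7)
The plan is to transport the asymptotic argument of Theorem~\ref{thm:LR_simple_power1} from the simple alternative (the case $R=2$) to the composite alternative $\calH_R$ with $R\in]0;1]$. Because the null hypothesis does not change, the asymptotic distribution of $\widetilde{\Lambda}_2(\mZ)$ under $\calH_0$ remains $\calN(0,1)$ and the decision threshold $\widetilde{\tau}_{\alpha_0}=\Phi^{-1}(1-\alpha_0)$ from Theorem~\ref{thm:thres_simple_test1} still guarantees $\widetilde{\delta}_2\in\calK_{\alpha_0}$. All that remains is to determine the asymptotic distribution of $\widetilde{\Lambda}_2(\mZ)$ under $\calH_R$ and to evaluate the corresponding tail probability.

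The key observation is that the stego-pixel pmf~\refeq{eq:distrib_H0-1} admits the mixture decomposition $q^{R}_{\theta_n}=(1-R/2)\,p_{\theta_n}+(R/2)\,q^{2}_{\theta_n}$: under $\calH_R$ each pixel $z_n$ is independently either an unmodified cover with probability $1-R/2$ or a pixel modified by $\pm 1$ as in the $R=2$ case with probability $R/2$. Conditioning on this Bernoulli indicator and invoking the laws of total expectation and total variance immediately recover the expressions~\refeq{eq:match_HR_moments} for $\mu_R$ and $\sigma_R^2$. The $z_n$ are independent and take values in the finite set $\calZ$, so each $\widetilde{\Lambda}_2(z_n)$ is uniformly bounded and Lindeberg's CLT~\cite[Theorem~11.2.5]{lehman05} applies exactly as in Section~\ref{ssec:stat_perf_LR_R1}, yielding
\[
	\widetilde{\Lambda}_2(\mZ)\ \rightsquigarrow\ \calN\!\left(\frac{\sqrt{N}(\mu_R-\mu_0)}{\sigma_0}\,,\,\frac{\sigma_R^2}{\sigma_0^2}\right)\quad\text{under }\calH_R.
\]

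To conclude, I would extract the power directly from this asymptotic Gaussian law,
\[
	\widetilde{\beta}_{\delta_R}=\prob_R\!\left[\widetilde{\Lambda}_2(\mZ)>\widetilde{\tau}_{\alpha_0}\right]=1-\Phi\!\left(\frac{\sigma_0}{\sigma_R}\widetilde{\tau}_{\alpha_0}-\frac{\sqrt{N}(\mu_R-\mu_0)}{\sigma_R}\right),
\]
then substitute $\widetilde{\tau}_{\alpha_0}=\Phi^{-1}(1-\alpha_0)$ together with the identity $\mu_R-\mu_0=(R/2)(\mu_2-\mu_0)$ coming from~\refeq{eq:match_HR_moments}, and collect terms to obtain~\refeq{eq:power_simple_testR}. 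The hard part is essentially cosmetic: verifying the Lindeberg condition for the mixture law is immediate from the uniform boundedness of $\widetilde{\Lambda}_2(z_n)$ on the finite alphabet $\calZ$, and the remainder is algebraic bookkeeping. No genuinely new probabilistic ingredient is needed beyond what was already exploited in the proof of Theorem~\ref{thm:LR_simple_power1}.
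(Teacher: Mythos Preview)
Your proposal is correct and mirrors the paper's own argument: the paper likewise notes that the null hypothesis (hence the threshold) is unchanged, computes $\mu_R$ and $\sigma_R^2$ from the mixture decomposition via the laws of total expectation and total variance, and then reuses the CLT-based power calculation of Theorem~\ref{thm:LR_simple_power1}. One small bookkeeping remark: your substitution $\mu_R-\mu_0=(R/2)(\mu_2-\mu_0)$ actually yields a coefficient $R/2$ rather than the $R$ printed in~\refeq{eq:power_simple_testR}, which is a typographical slip in the stated formula and not a defect in your reasoning.
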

The power functions $\widetilde{\beta}_{\delta_{R}}$ for $N\!=\!1000$,
$R\!=\!0.1$, $\sigma_{n}\!=\!0.5$ and $\theta_{n}\!=\!\{127.5;128\}$
are drawn in Figure~\ref{fig:detect_theo_match_simul_beta_vs_alpha}.
Once again, this figure highlights the potentially significant impact
of pixel expectation on the performance of the test $\widetilde{\delta}_{2}$.\\
 It should be highlighted that the most powerful property of the
test $\widetilde{\delta}_{2}$ is difficult to prove for $R\in]0;1[$,
see~\cite{Cog2011ISIT}. However, Figure~\ref{fig:detect_theo_clairvoyant}
emphasizes the relevance of the proposed approach, which consists
in designing a test for $R=2$ and extending its application to $R\in]0;1[$.
Here, the power function of the proposed test is compared with the
power function of the clairvoyant detector, that knows $R$. The numerical
comparison present in Figure~\ref{fig:detect_theo_clairvoyant} shows
that the loss of the power is negligible.

Finally, it can be noted that the detection power as given in Theorem~\ref{thm:LR_simple_powerR}
complies with the square root law of steganographic capacity~\cite{ker07}.
Indeed, from~\refeq{eq:power_simple_testR}, a short algebra immediately
permits us to establish that:
\begin{equation}
	\lim\limits _{\sqrt{N}/L\to0}\widetilde{\beta}_{\delta_{R}}=1\;\;\mbox{ and }\;\;\lim\limits _{\sqrt{N}/L\to\infty}\widetilde{\beta}_{\delta_{R}}=\alpha_{0}.
\label{eq:_LR_simple_powerR_SRL}
\end{equation}

\begin{figure}[!t]
\vspace*{-0.3cm}
\centerline{
  \def\svgwidth{0.5\textwidth}
  \hspace*{-0.15cm}
    \ifpdf
      \input{./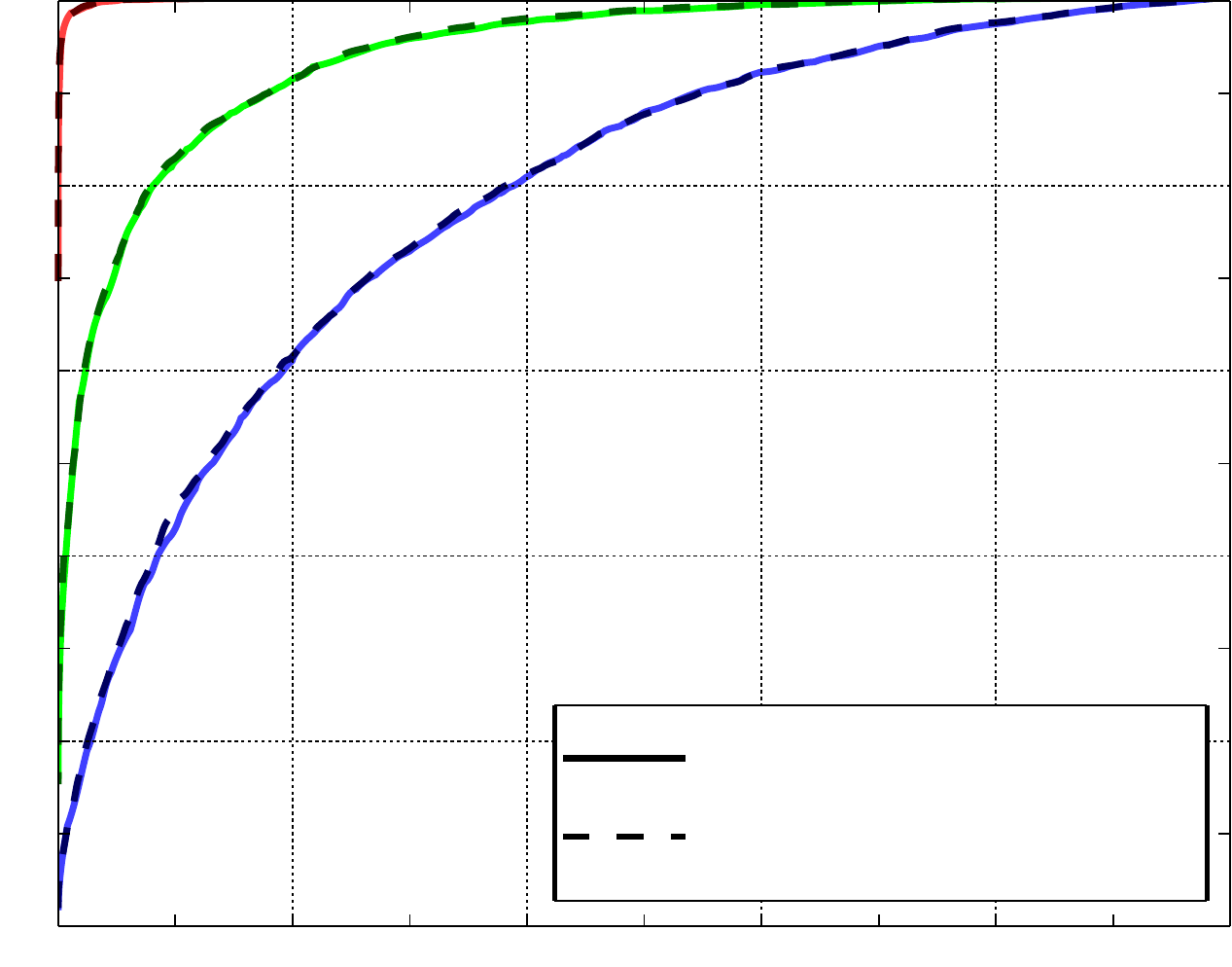_tex}
    \else
    \fi
  }
\caption{Numerical comparison between Proposed LR test $\widetilde{\delta_2}$~\refeq{eq:simplifiedLR_test}, and the clairvoyant detector which knows the embedding rate $R=0.1$ ans, thus, uses the LR test design for this rate. Results were obtained from a Monte-Carlo simulation with $5.10^4$ realizations using \emph{Lena} image cropped to $128\times128$ pixels and addition of a Gaussian white noise with $\sigma=2$.}
\label{fig:detect_theo_clairvoyant}
\end{figure}

\section{Practical implementation of proposed LR test.}\label{sec:GLRT}

In a practice, the application of the test $\widetilde{\delta_{2}}$~\refeq{eq:simplifiedLR_test}
is compromised because neither the expectation $\theta_{n}$ nor the
variance $\sigma_{n}^{2}$ of pixels are known: their estimated values,
denoted $\widehat{\theta}_{n}$ and $\widehat{\sigma}_{n}^{2}$, respectively,
have to be used instead.

However, accurate estimation of the parameters $\theta_{n}$ and $\sigma_{n}$
is a difficult problem but necessary to obtain a high detection performance.
This problem also occurs in LSB replacement steganalysis. An efficient
yet simple way to overcome this problem was introduced in the well-known
Weighted Stego-image steganalysis (WS), initially proposed in~\cite{Fridrich04c}.
The authors propose to locally estimate the parameter $\theta_{n}$
by filtering the inspected image so that $\widehat{\theta}_{n}$ correspond
to the mean of the four surrounding pixels. Similarly, the local variance
of the four surrounding pixels is used to estimate $\sigma_{n}^{2}$.
The WS method has been studied thoroughly in~\cite{Ker08a} and two
major improvements have been proposed. First, the authors have empirically
enhanced the estimation of pixel expectations by testing different
local filters. Second, the author proposed to use moderated weights
$w_{n}=\widehat{\sigma}_{n}^{2}+\alpha\,,\,\alpha>0$ instead of the
variance estimation $\widehat{\sigma}_{n}^{2}$.

In the present paper, it is proposed to use the WS filtering method
to estimate the parameters $\theta_{n}$ and $\sigma_{n}^{2}$. Note
that the proposed practical test is not optimal but intends to show
the relevance of the proposed approach and feasibility to design a
practical efficient test. Following the WS method, the practical implementation
of the LR test $\widehat{\delta}_{2}$ proposed in this paper estimates
each $\theta_{n}$ by filtering the inspected image with the kernel:
$$\frac{1}{4}
  \begin{pmatrix}
    -1 & 2 & -1 \\
     2 & 0 & 2  \\
    -1 & 2 & -1
  \end{pmatrix}
$$
 Contrary to what is suggested in~\cite{Ker08a}, for the case of
LSB replacement, our numerical experiments indicate that the detection
performance tends to get worse when using the moderated weights instead
of the estimated variance. Our interpretation of this phenomenon is
as follows. The proposed LR test~\refeq{eq:simplifiedLR_test} essentially
relies on the increase of pixels' variance due to insertion of hidden
information. Hence, the use of moderated weights tends to fundamentally
bias the test and deflates the performance results. Figure~\ref{fig:BOSS_COR_comp_alpha}
offers an example of this phenomenon through a comparison of ROC curves
obtained using 10\,000 images from the BOSSbase database with $R=1/2$
and $\alpha=\{1/4;1/2;3/4;1\}$.\\
 On the other hand, the direct use of the estimated variance $\widetilde{\sigma}_{n}^{2}$
may lead to numerical instability particularly in flat image areas.
Hence, it was chosen to add $\alpha=1/4$ to the estimated variance
in our numerical experiments.

By using these estimated values in expression~\refeq{eq:match_log_LR_r_simple}
the estimated log-LR $\widetilde{\Lambda}_{2}(z_{n})$, see Equation~\refeq{eq:match_log_LR_r_simple}
becomes:
\begin{equation}
	\widehat{\Lambda}_{2}(z_{n})=\log\left[\exp\left(\frac{z_{n}-\widehat{\theta}_{n}}{(\alpha+\widehat{\sigma}_{n})^{2}}\right)+\exp\left(\frac{\widehat{\theta}_{n}-z_{n}}{(\alpha+\widehat{\sigma}_{n})^{2}}\right)\right].
\label{eq:match_log_GLR_r_simple}\end{equation}
It should be highlighted that some difficult problems still remain
open.\\
First, the normalization of the log-LR, suggested in Equation~\refeq{eq:simplifiedLR_test},
requires the calculation of the expectation $\mu_{0}$ and the variance
$\sigma_{0}^{2}$ of the log-LR. Unfortunately, the estimates of the
parameters $\sigma_{n}$ are, in practice, not accurate enough to
perform this normalization efficiently.\\
Second, possibly the most difficult problem is that the statistical
inference between the cover image and the hidden information should
be taken into account. For instance it was proposed in~\cite{Zit2011IH}
to remove the LSB plane in order to remove any potential stego-noise.
For LSB matching this is not possible. Therefore, the impact of hidden
information on estimators $\widehat{\theta}_{n}$ and $\widehat{\sigma}_{n}$
should be studied. Since the proposed test relies mainly on the slight
increase of pixels' variance due to data hiding, the embedding changes
may have an important effect on the estimates $\widehat{\sigma}_{n}$
and on the proposed test.

\begin{figure}[!t]
\centerline{
\hspace*{-0.5cm}
  \subfloat[ROC obtained with four different weight factor: $\alpha=\{1/4 \,;\, 1/2 \,;\, 3/4 \,;\, 1 \}$.]{
  \hspace*{-0.55cm}
  \def\svgwidth{0.45\textwidth}
    \ifpdf
      \input{./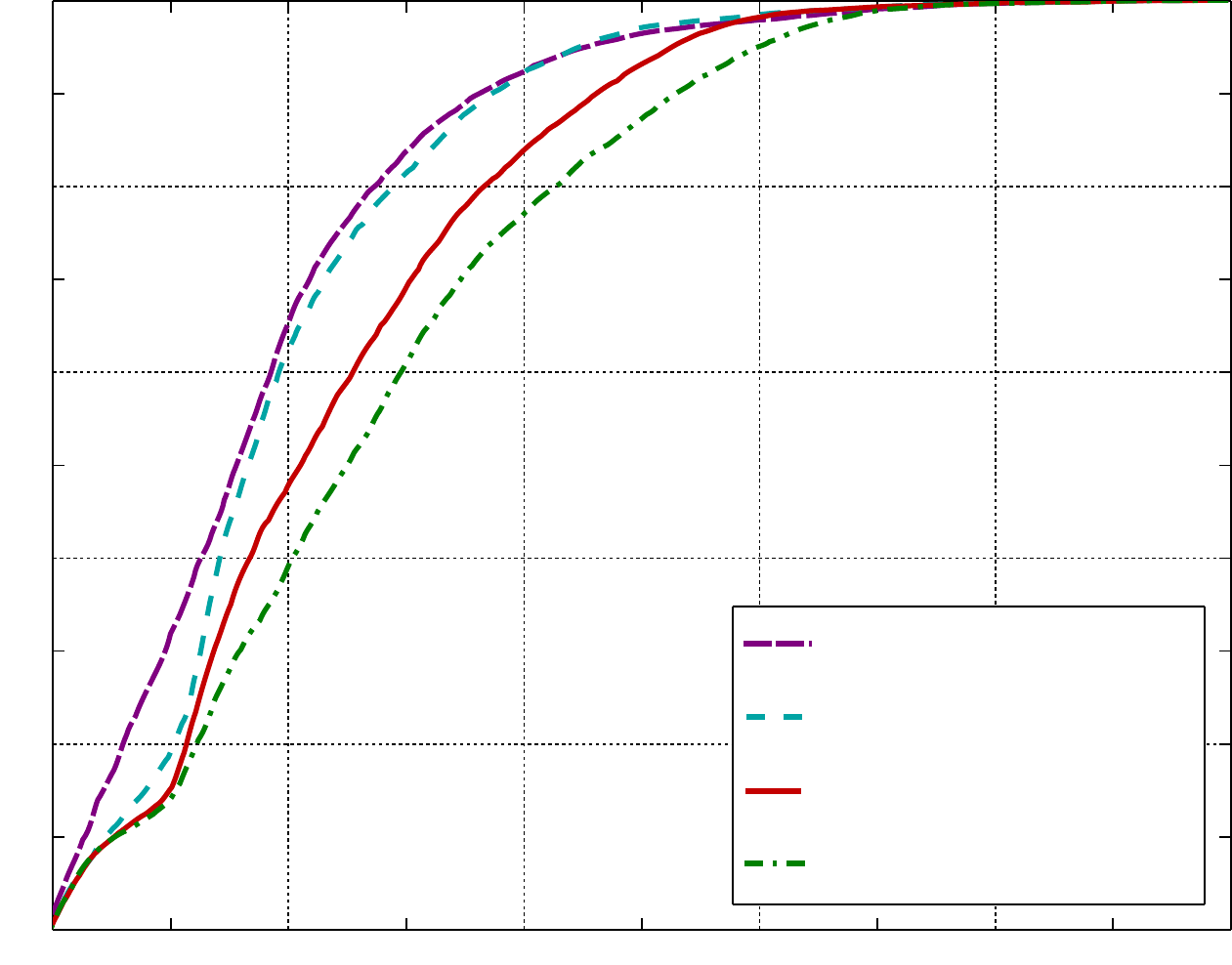_tex}
    \else
    \fi
  \label{fig:BOSS_COR_comp_alpha}  }
\hspace*{0.35cm}
  \subfloat[ROC curves obtained with the two statistics $\widehat{\Lambda}_{2}$ and $\widehat{\Lambda}_{2}^{\star}$.]{
  \def\svgwidth{0.45\textwidth}
  \hspace*{-0.55cm}
    \ifpdf
      \input{./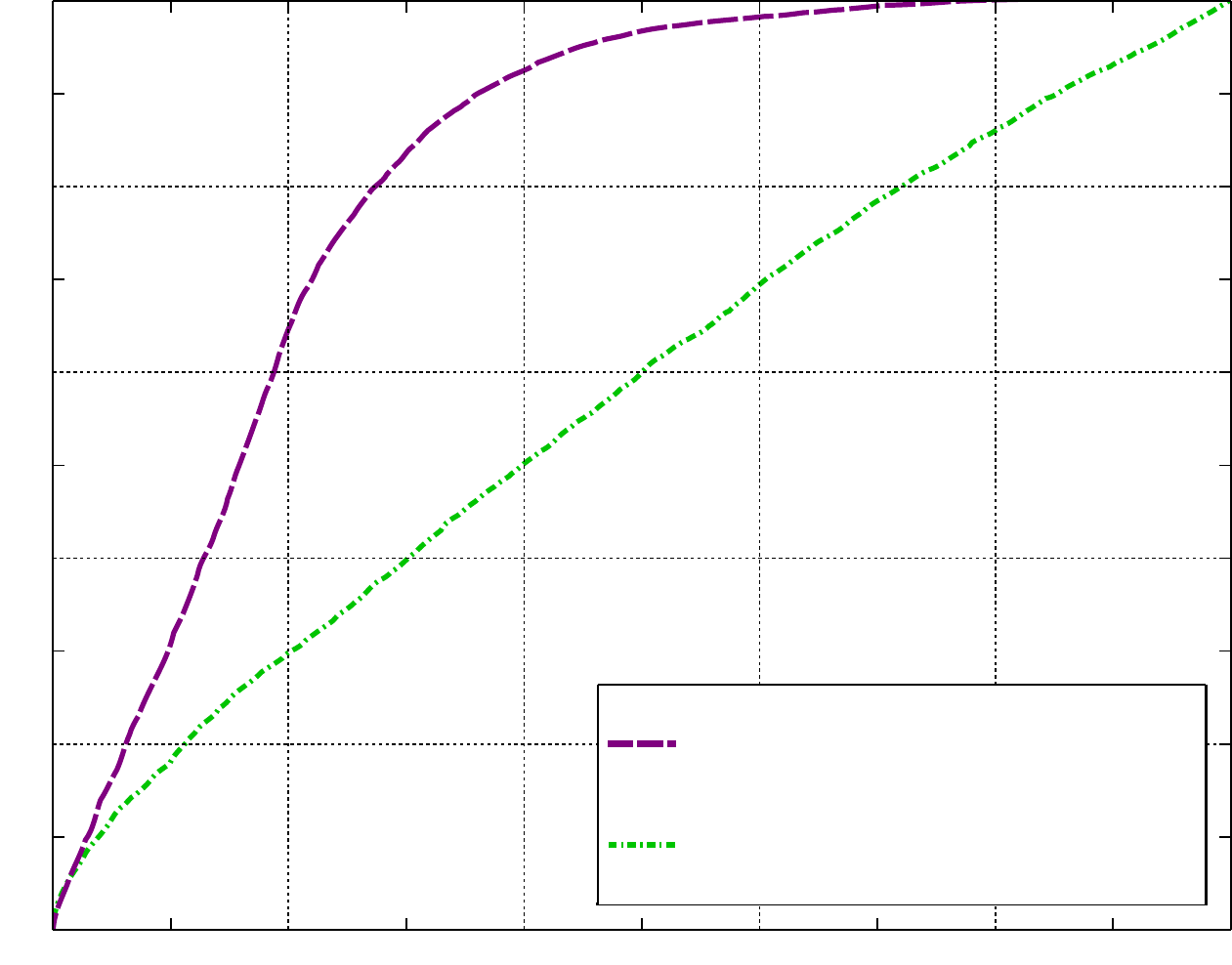_tex}
    \else
    \fi
  \label{fig:BOSS_COR_comp_LR}  }
  }
\caption{Impact of weights and calibration on proposed test performance. ROC curves obtained using the images from BOSS database~\cite{BOSS2010} with $R=0.5$.}
\label{fig:BOSS_COR_comp}
\end{figure} 

As explained above, proper normalization of the proposed test is critical
in practice. Even though the proposed LR is very sensitive to hidden
information, if its expectation can not be set to a fixed value under
$\calH_{0}$, the between-image-error described in~\cite{BohmeKer2006}
may negatively impact the test accuracy. Numerical simulations show
that the expectation of the LR $\widehat{\Lambda}_{2}(z_{n})$ can
be roughly approximated by $-\log(2)-\frac{1}{4\widehat{\sigma}_{n}^{2}}$.\\
 Therefore, the practical test proposed in the present paper is
given as:
\begin{eqnarray}
	& & \widehat{\delta}_{2}=
	\begin{cases}
		\calH_{0} & {\displaystyle \mbox{ if }\;\;\;\widehat{\Lambda}_{2}^{\star}(\mZ)\leq\widehat{\tau}_{\alpha_{0}},}\\
		\calH_{1} & {\displaystyle \mbox{ if }\;\;\;\widehat{\Lambda}_{2}^{\star}(\mZ)>\widehat{\tau}_{\alpha_{0}},}
	\end{cases}
\label{eq:simplifiedGLR_test}\\
	&\mbox{ with }& \widehat{\Lambda}_{2}^{\star}(\mZ)=\frac{1}{\sqrt{N}}\sum_{n=1}^{N}\widehat{\Lambda}_{2}(z_{n})-\log(2)-\frac{1}{4(\alpha+\widehat{\sigma}_{n})^{2}}.
\end{eqnarray}
 One can note that, contrary to the LR statistically studied throughout
Sections~\ref{ssec:stat_perf_LR_R1}--\ref{ssec:stat_perf_LR_anyR},
the proposed decision statistic is not normalized. Indeed the variance
of $\widehat{\Lambda}_2(z_{n})$ is not taken into account in Equation~\ref{eq:simplifiedGLR_test}.
This is because the estimation of pixels' variance is particularly
difficult and the method used in this paper is not accurate enough.
In fact, normalization can even lower the detection performance. The
most notable thing about the test~\refeq{eq:simplifiedGLR_test}
is that the expectation of the decision statistics $\widehat{\Lambda}_2^{\star}(\mZ)$
is always $0$ under hypothesis $\calH_{0}$. Figure~\ref{fig:BOSS_COR_comp_LR}
shows an example of the detection power obtained with the two tests
based on the statistics~\refeq{eq:match_log_GLR_r_simple} and~\refeq{eq:simplifiedGLR_test}.

\section{Numerical Simulations.}\label{sec:numerical_sim}
\subsection{Theoretical results on simulated data.}

\begin{figure}[!t]
\vspace*{-0.85cm}
\centerline{
\hspace*{-0.75cm}
      \subfloat[Digital image used for the Monte-Carlo simulations]{
	\hspace*{-0.15cm}
	\ifpdf
	  \includegraphics[width=0.41\textwidth, height=4.25cm]{./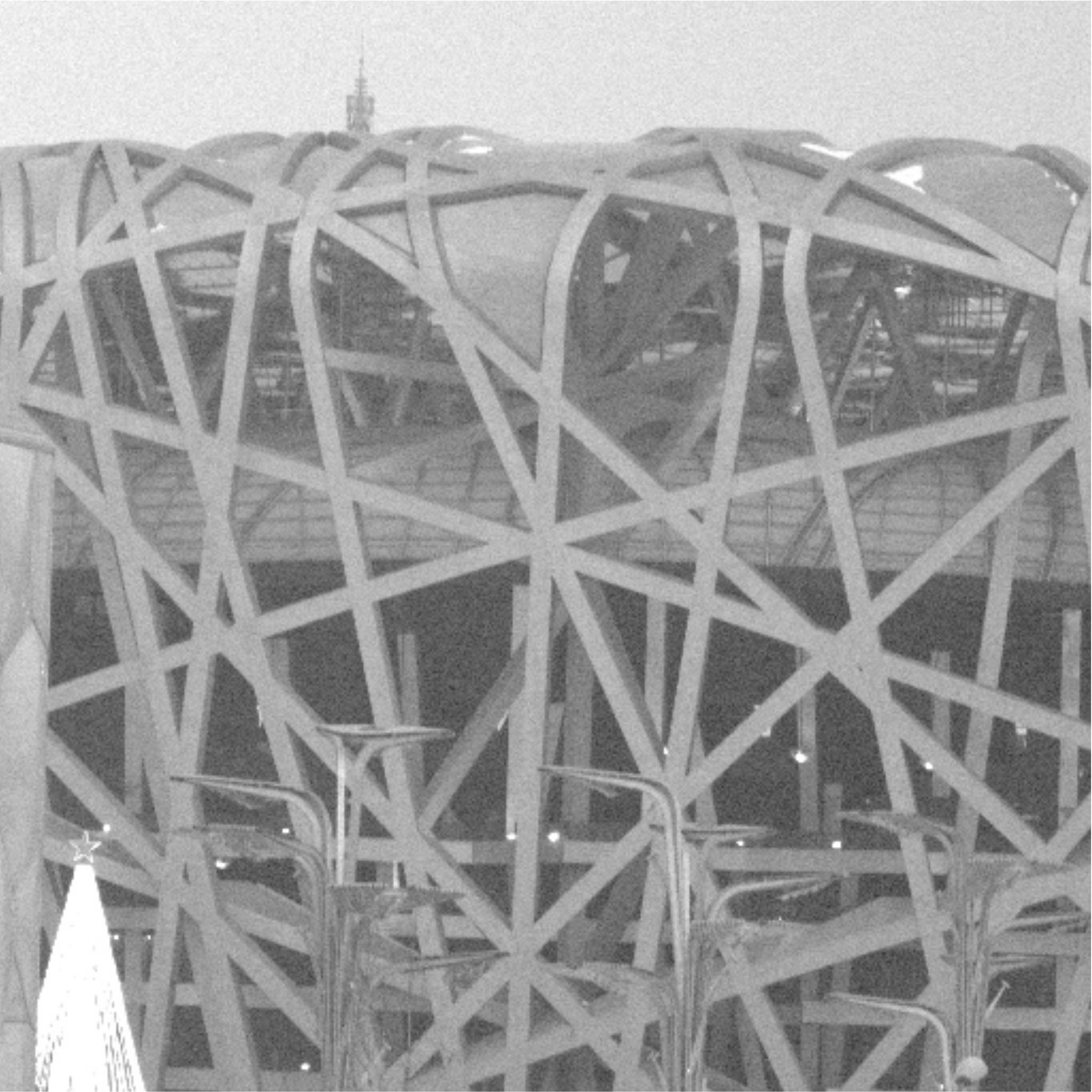}
	\else
	\fi
	\label{fig:MC_img_pwr_NO}}
\hspace*{0.35cm}
      \subfloat[Power of the test $\widehat{\delta}_2$~\refeq{eq:simplifiedGLR_test} as a function of pixel number for different false-alarm probabilities: theory and simulation.]{
	\def\svgwidth{0.45\textwidth}
	\ifpdf
	  \input{./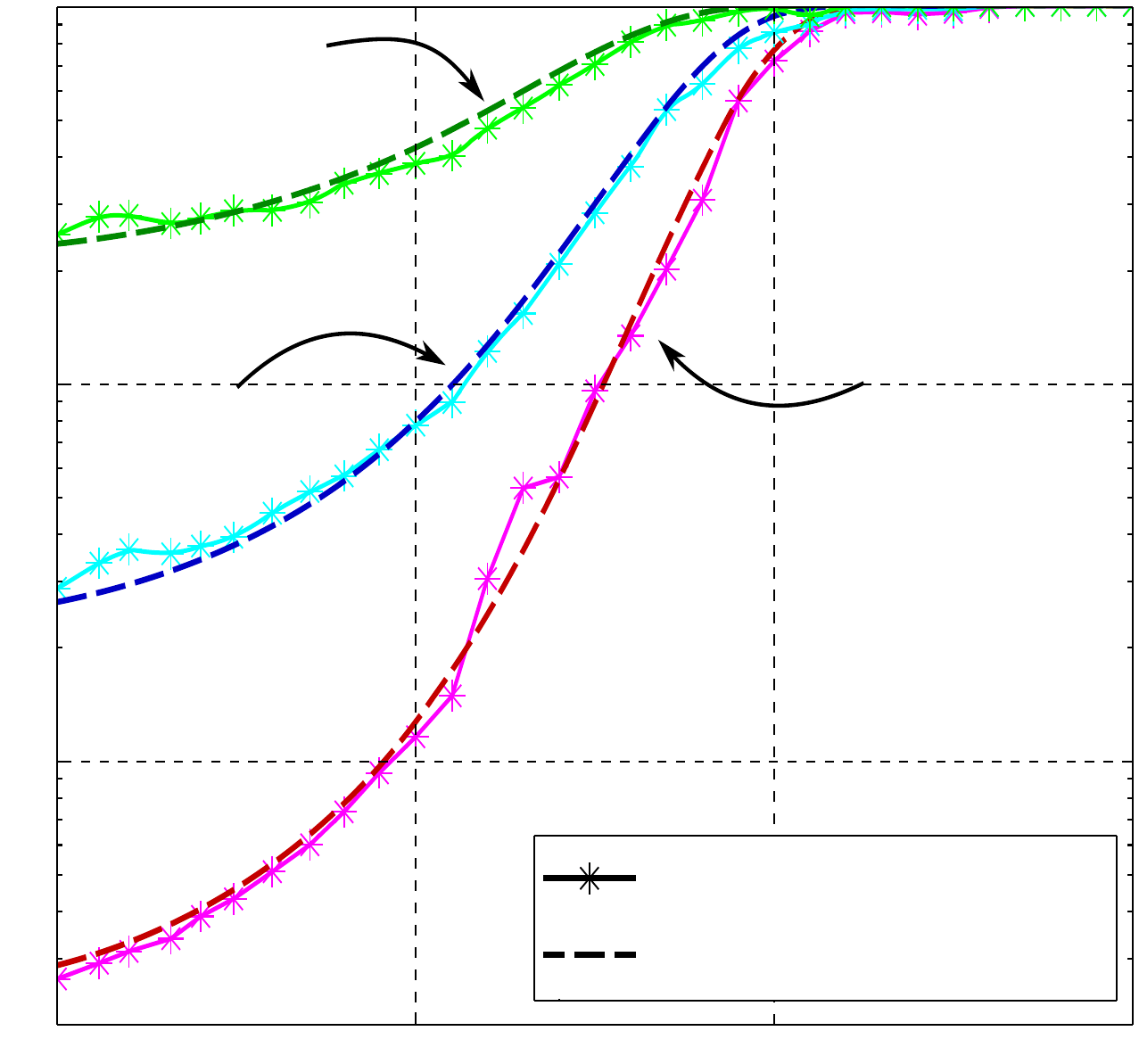_tex}
	\else
	\fi
	\label{fig:MC_img_pwr_vs_alpha_ROC}}
  }
\caption{Numerical verification of theoretical results through Monte-Carlo simulation based on natural image shown in Figure~\ref{fig:MC_img_pwr_NO}.}
\label{fig:MC_img_pwr_all}
\end{figure}

Figure~\ref{fig:MC_img_pwr_all} presents a numerical verification
of Theorem~\ref{thm:LR_simple_powerR}. The image shown in Figure~\ref{fig:MC_img_pwr_NO} has been analyzed $5.10^{4}$ times. Each run was preceded by the addition of a zero-mean Gaussian noise whose standard deviation was
$\sigma=1$. The embedded hidden information was drawn from a binomial
distribution $\mathcal{B}(1,1/2)$ with an embedding rate $R=1$.
The empirical power of the test $\widehat{\delta}_{2}$ is compared
with the theoretical result given by Theorem~\ref{thm:LR_simple_powerR}
for three different false-alarm probabilities: $\alpha_{0}=\{10^{-1};10^{-2};10^{-3}\}$. Observe that the obtained detection power almost perfectly corresponds
to the theoretical results.\\
 Note that it is crucial to use the same image for this Monte-Carlo
simulation because the detection power of the proposed test depends
on image parameters, namely on $\theta_{n}$ and particularly on $\sigma_{n}^{2}$.
Hence, for a different image, the detection power may differ significantly
as explained in Section~\ref{sec:stat_perf_LR_conv}. Moreover, the
use of the same image artificially permits us to overcome the difficult
problem of normalizing the log-LR and, thus, the effects of the between-image-error
described in~\cite{BohmeKer2006}.

\subsection{Comparison with the state of the art on real images.}
Matlab source code of proposed test, as detailed in Equation~\refeq{eq:simplifiedGLR_test}, is available on the Internet at~:
\url{http://remi.cogranne.pagesperso-orange.fr/}.
\begin{figure}[!t]
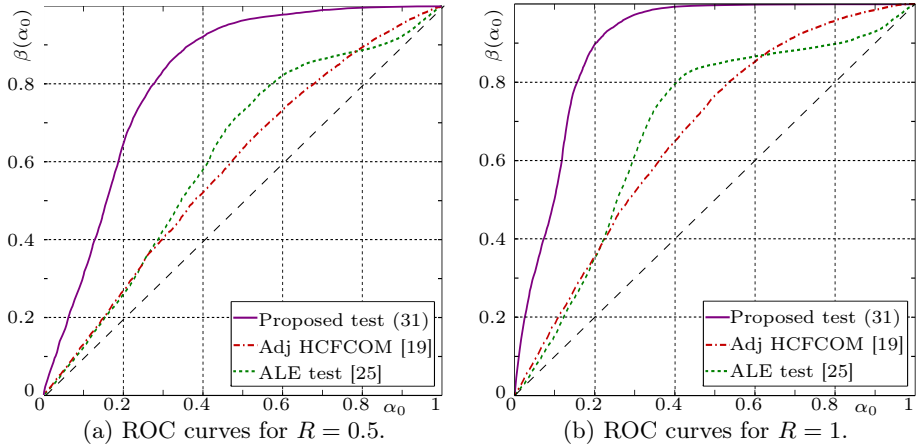

\centerline{
\hspace*{-0.5cm}
  \subfloat[ROC curves for $R=0.5$.]{
  \hspace*{-0.55cm}
  \def\svgwidth{0.45\textwidth}
    \ifpdf
      \input{./images/LSB_matching_ROC_005_IH.pdf_tex}
    \else
    \fi
  \label{fig:BOSS_COR_R05}  }
\hspace*{0.35cm}
  \subfloat[ROC curves for $R=1$.]{
  \def\svgwidth{0.45\textwidth}
  \hspace*{-0.55cm}
    \ifpdf
      \input{./images/LSB_matching_ROC_010_IH.pdf_tex}
    \else
    \fi
  \label{fig:BOSS_COR_R1}  }
  }
\caption{Numerical comparisons of detectors performance using BOSS database~\cite{BOSS2010}.}
\label{fig:num_results_BOSS}
\end{figure} 
\begin{figure}[!b]
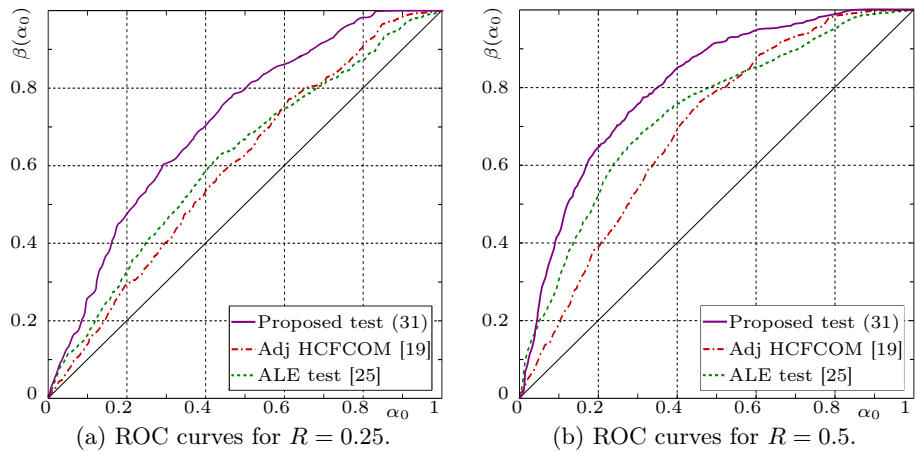

\centerline{
\hspace*{-0.5cm}
  \subfloat[ROC curves for $R=0.25$.]{
  \hspace*{-0.55cm}
  \def\svgwidth{0.45\textwidth}
    \ifpdf
      \input{./images/LSB_matching_ROC_dresden_025_IH.pdf_tex}
    \else
    \fi
  \label{fig:Dresden_COR_R025}  }
\hspace*{0.35cm}
  \subfloat[ROC curves for $R=0.5$.]{
  \def\svgwidth{0.45\textwidth}
  \hspace*{-0.55cm}
    \ifpdf
      \input{./images/LSB_matching_ROC_dresden_05_IH.pdf_tex}
    \else
    \fi
  \label{fig:Dresden_COR_R05}  }
  }
\caption{Comparisons of detectors performance using Dresden database~\cite{DresdenDB2010}.}
\label{fig:num_results_Dresden}
\end{figure} 

One of the main motivations for this paper was to show that the hypothesis
testing theory can be applied in practice to design an efficient LSB
matching detector. This fact can only be shown by a numerical comparison
with state-of-the-art detectors on large image databases. The potential
competitors for LSB matching detection are not as numerous as for
LSB replacement. As briefly described in the introduction, the operational
context selected in this paper eliminates all prior-art detectors
based on machine learning. Almost every other detector found in the
literature is based on the image histogram. For the present comparison,
two histogram-based detectors, namely ALE~\cite{CoxDoerr07pm} and
the adjacency HCF COM~\cite{ker05pm} detector, were used due to
their high detection performance.\\
 Figure~\ref{fig:num_results_BOSS} shows the results obtained
with 10\,000 images from BOSSbase contest database~\cite{BOSS2010}.
Each hidden bit was drawn from a binomial distribution $\mathcal{B}(1,1/2)$.
The embedding rate was $R=0.5$ in Figure~\ref{fig:BOSS_COR_R05}
and $R=1$ in Figure~\ref{fig:BOSS_COR_R1}. Both figures show that
the proposed test achieves a better detection power for any prescribed
false-alarm probability.


Similarly, Figure~\ref{fig:num_results_Dresden} shows the results
obtained with the 1488 raw images from the `Dresden Image Database'~\cite{DresdenDB2010}. 
Prior to our experiments, each image was converted to an unprocessed
TIFF format (using dcraw) and only the red color channel was used.
The embedding rate was $R=0.25$ in Figure~\ref{fig:Dresden_COR_R025}
and $R=0.5$ in Figure~\ref{fig:Dresden_COR_R05}. The results presented
in Figures~\ref{fig:Dresden_COR_R025} and~\ref{fig:Dresden_COR_R05}
confirm that the proposed test has a better detection power for any
prescribed false-alarm probability. Moreover by changing the embedding
rate, the combined results of Figures~\ref{fig:num_results_BOSS}
and~\ref{fig:num_results_Dresden} show that the proposed test also
performs better than prior art for any $R$.

Note that, surprisingly, the detection power of the proposed test
is slightly higher for the BOSSbase database than for the Dresden
database for $R=0.5$, see Figure~\ref{fig:BOSS_COR_R05} and~\ref{fig:Dresden_COR_R05},
respectively, whereas the Dresden database images are bigger. This
phenomenon can be explained by the fact that the Dresden database
images are RAW images that have not being further processed. In contrast,
BOSSbase images have been downsampled, which may introduce correlations
between neighboring pixels that implicitly make the filtering estimator
more efficient.

\section{Conclusion and future works.}\label{sec:conclusion}

The first step to fill the gap between hypothesis testing theory and
steganalysis was recently proposed in~\cite{DMS2004,Cog2011IH,Zit2011IH}.
This paper extends this first step to the case of LSB matching. By
casting the problem of LSB matching steganalysis in the framework
of hypothesis testing theory, the most powerful likelihood ratio test
is designed. Then, a thorough statistical study permits analytical
calculations of its performance in terms of the false-alarm probability
and detection power. To apply this test in practice, unknown image
parameters have to be estimated. Based on a simple estimation of these
unknown parameters, a practical test is proposed.\\
 The relevance of the proposed approach is emphasized through
numerical experiments. Compared to two leading histogram-based detectors,
the proposed practical test achieves a better detection power.

However, the practical test presented in this paper relies on a simple
yet efficient filtered version of inspected media to estimate pixel
expectations and variances. In our future work, a more efficient model
should be used to increase the detection power. Lastly, a thorough
statistical study of the impact of this estimation on detection performance
is desirable to complete the present work.

\section{Acknowledgments.}\label{sec:ack}

The authors would like to thank Jessica Fridrich for her contributions and stimulating discussions.



\end{document}